\documentclass[11pt]{article}
\usepackage[margin=0.86in]{geometry}
\usepackage[T1]{fontenc}
\usepackage{lmodern}
\usepackage{microtype}
\usepackage{amsmath,amssymb,amsthm,mathtools}
\usepackage{booktabs,array,tabularx,longtable}
\usepackage{enumitem}
\usepackage{xcolor}
\usepackage{xurl}
\usepackage{tikz}
\usetikzlibrary{calc,positioning,arrows.meta}
\usepackage[hidelinks]{hyperref}
\usepackage[nameinlink,noabbrev]{cleveref}

\definecolor{loopblue}{RGB}{24,90,190}
\definecolor{iceblue}{RGB}{205,235,248}
\definecolor{witnessred}{RGB}{190,35,45}
\definecolor{gridgray}{RGB}{185,185,185}

\newtheorem{theorem}{Theorem}[section]
\newtheorem{lemma}[theorem]{Lemma}
\newtheorem{corollary}[theorem]{Corollary}

\theoremstyle{definition}

\theoremstyle{remark}

\crefname{theorem}{theorem}{theorems}
\Crefname{theorem}{Theorem}{Theorems}
\crefname{lemma}{lemma}{lemmas}
\Crefname{lemma}{Lemma}{Lemmas}
\crefname{corollary}{corollary}{corollaries}
\Crefname{corollary}{Corollary}{Corollaries}
\crefname{proposition}{proposition}{propositions}
\Crefname{proposition}{Proposition}{Propositions}
\crefname{definition}{definition}{definitions}
\Crefname{definition}{Definition}{Definitions}
\crefname{remark}{remark}{remarks}
\Crefname{remark}{Remark}{Remarks}

\newcommand{\FNP}{\mathrm{FNP}}

\title{Ice Walk is ASP-Complete}
\author{Papangkorn Apinyanon\\
Massachusetts Institute of Technology\\
\texttt{raya@mit.edu}}
\date{}

\begin{document}
\maketitle

\noindent
    \begin{abstract}
        We prove that the solution-search problem for the pencil puzzle Ice Walk
        is ASP-complete.  Our reduction maps Hamiltonian cycles in an undirected
        maximum-degree-3 spanning subgraph of a rectangular grid graph bijectively
        to Ice Walk solutions.
    \end{abstract}

    \section{Introduction}

    \subsection{Ice Walk}
    Ice Walk is a loop puzzle introduced by Broken Sign Games
    \cite{IceWalkRules}.  An instance is a rectangular grid whose cells are
    either icy or dry; some dry cells carry a positive-integer clue.  A solution is
    a single loop through centres of orthogonally adjacent cells that passes
    through every clue.  The loop may not turn on ice.  Two perpendicular
    segments may intersect only in an icy cell, and the loop may neither turn
    at such an intersection nor overlap itself.  A clue $q$ specifies that the
    maximal consecutive dry section of the loop containing that clue has
    $q$ cells.  Two solutions are
    considered identical when they use the same unoriented set of unit grid
    edges; the starting point and direction of traversal are ignored.

    \subsection{ASP-Completeness}
    We view Ice Walk as a search problem whose solutions are the completed
    loops.  An \emph{ASP reduction} between search problems is a polynomial-time
    map on instances together with polynomial-time, mutually inverse maps
    between the solution sets of each input instance and its image.  A problem
    is \emph{ASP-complete} if it is in $\FNP$ and every NP search problem has
    such a reduction to it.  Here $\FNP$ is the class of search problems for
    which every proposed solution has polynomial length and can be checked in
    polynomial time.  In particular, ASP-completeness records more than
    NP-hardness: it preserves each individual solution rather than merely the
    existence of one \cite{YatoSeta2003,MITHardness2026}.

    \subsection{Related Work}
    Yato and Seta introduced ASP-completeness as a framework for ``another
    solution'' problems \cite{YatoSeta2003}.  Tang's T-metacell framework gives
    a convenient route from grid-graph Hamiltonicity to many loop-puzzle
    hardness results \cite{Tang2022}.  More recently, MIT Hardness Group proved
    ASP-completeness for Hamiltonicity in several grid-graph classes, including
    the restricted source problem used here \cite{MITHardness2026}.  Our
    reduction is direct: instead of simulating a source vertex by a metacell,
    it realizes source edges as straight visibility tracks between clues.

    \subsection{The grid-graph source problem}
    Let $R_{r,c}$ be the full $r\times c$ rectangular grid graph.  A
    \emph{maximum-degree-3 spanning subgraph of $R_{r,c}$} retains all $rc$ vertices,
    deletes an arbitrary subset of grid edges, and has maximum degree at most 3.

    \begin{theorem}[MIT Hardness Group]\label{thm:source}
        The solution-search problem for undirected Hamiltonian cycle in a
        maximum-degree-3 spanning subgraph of a rectangular grid graph is ASP-complete.
        It remains ASP-complete under the additional promise that every degree-3 vertex
        has a forced side edge \cite[Theorem~17]{MITHardness2026}.
    \end{theorem}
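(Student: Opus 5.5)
This theorem is imported from \cite[Theorem~17]{MITHardness2026}, so the plan is to reconstruct the standard argument for it. The natural starting point is ASP-completeness of Hamiltonian cycle in general (or max-degree-3 planar) graphs, which the ASP framework supports because parsimonious reductions from SAT compose with bijections of solution sets \cite{YatoSeta2003}. Membership in $\FNP$ is immediate: a Hamiltonian cycle is a set of at most $rc$ grid edges, and we can check in polynomial time that it is a single cycle covering every vertex. The work is the hardness direction.

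\textbf{Choice of source.} I would reduce from Hamiltonian cycle in planar graphs of maximum degree 3, taken ASP-complete via the classical parsimonious chain (3SAT $\to$ planar cubic Hamiltonicity with parsimonious gadgets, as used by Yato--Seta). First I would embed such a graph orthogonally in a grid, using a polynomial-size planar orthogonal drawing. Then I would replace each vertex by a block gadget and each edge by a corridor gadget. Every gadget is a spanning subgraph of a rectangle of grid vertices, obtained by deleting grid edges so that the degree stays at most 3.

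\textbf{Gadget requirements.}
\begin{itemize}[nosep]
\item A corridor is a width-2 strip, i.e.\ a ladder with rungs deleted except at its ends, chosen so that it has exactly two local states: either the tour traverses the edge, or it makes a U-turn that locally covers the strip. Each state must have a unique local covering.
\item A degree-3 vertex gadget must admit, for each of the three pairs of used incident edges, exactly one covering of its cells. It must also forbid covering with zero or one incident edges used.
\item Every degree-3 vertex must have a forced side edge, meaning one incident edge that every Hamiltonian cycle uses. I would aim to get this by designing each vertex gadget so that its degree-3 cells sit where one incident edge is a bridge-like corridor, or a degree-2 neighbour forces it.
\item Unused host regions are filled with $2\times2$ blocks attached to an adjacent corridor, so that they absorb deterministically.
\end{itemize}
Given these, the map from source Hamiltonian cycles to grid Hamiltonian cycles is a polynomial-time bijection, and its inverse simply reads off which corridors are traversed.

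\textbf{Main obstacle.} The hardest step is parsimony under a degree bound of 3. Grid Hamiltonicity gadgets usually admit several local coverings, and they only cover correctly if the remaining rectangle has no alternative routings. I would first try the MIT Hardness Group approach: an exhaustive computer check of each constant-size gadget, enumerating all local path covers consistent with each boundary interface. This verifies uniqueness per interface and also verifies the forced-side-edge property at each degree-3 vertex. Parity and checkerboard colouring constraints could additionally break the gadgets, so I would pad with corrective blocks and re-verify.
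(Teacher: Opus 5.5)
The paper does not prove this statement. It imports it as Theorem~17 of the MIT Hardness Group paper and uses it as a black box; the Ice Walk reduction in fact needs only the first sentence and never uses the forced-side-edge promise. Your reconstruction therefore has to stand on its own, and it does not. It lists properties the gadgets should have, then defers the entire nontrivial content to an exhaustive search over gadgets you never specify. That content is the existence of gadgets with exactly one local covering per interface state.

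Moreover, the one gadget you do specify violates its own requirement:
\begin{enumerate}[nosep]
\item In a width-$2$ ladder that keeps only its end rungs, every interior rail vertex has degree $2$ within the strip. Absent further attachments, both rails therefore lie in every Hamiltonian cycle.
\item Consequently the tour can join the two ends of the corridor only by two parallel strands, with neither rung used. A single traversal is impossible, yet your vertex interface of ``exactly two used incident corridors'' presupposes one.
\item The U-turn covering can close at either end rung, so the ``unused'' state has two local coverings. This can double the solution count for each unused source edge unless the vertex gadgets break the symmetry, which is exactly the part left unproved.
\end{enumerate}
So even granting an ASP-complete planar maximum-degree-$3$ source, parsimony is not established.

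The second sentence of the theorem is not handled either. In a maximum-degree-$3$ grid subgraph, every degree-$3$ vertex is a T: two collinear edges and one perpendicular side edge. The natural reading of the promise is that this side edge is forced, so every cycle turns there. That is stronger than ``some incident edge is forced.'' It must also hold at every degree-$3$ vertex of the final rectangle, not only inside vertex gadgets. This includes your corridor corners, where it holds only if the external edge is collinear with the rung rather than with the rail, and your filler attachment points. A ``bridge-like corridor'' cannot provide it: Hamiltonian graphs are bridgeless, and source edges are not forced in general. Forcing has to come from degree-$2$ neighbours placed by design. Finally, you assert but do not show that the rest of the $r\times c$ rectangle can be filled so that every leftover vertex is absorbed deterministically. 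As written, the proposal is a plan for re-deriving the cited result. The complete argument available here is the citation itself, together with the $\FNP$ membership you correctly give.
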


    \section{Reduction}
    \subsection{Construction}
    We represent $G$ directly.  Each source vertex becomes a dry 1-clue.
    Vertices in one maximal horizontal source component are placed
    on a common horizontal track, and vertices in one maximal vertical component
    on a common vertical track; different components receive different tracks.
    Every cell other than these clues is ice.  Thus a segment between two clues
    must travel straight along one track; the lemmas below show that precisely
    the source edges arise in this way.

    Let $G=(V,E)$ be a maximum-degree-3 spanning subgraph of a rectangular grid
    graph with $r$ rows and $c$ columns. 
    Each vertex in $G$ is referred to by its zero-based indices: 
    \[
        V=\{(i,j):0\le i<r,\ 0\le j<c\}.
    \]

    For each row $i$, partition its vertices into maximal intervals connected by
    present horizontal edges.  Let $h_i(j)$ be the left-to-right index, starting at
    0, of the interval containing $(i,j)$.
    Similarly, in each column $j$, partition the vertices into maximal intervals
    connected by present vertical edges, and let $v_j(i)$ be the top-to-bottom
    zero-based index of the interval containing $(i,j)$.

    The construction converts $G$ into an Ice Walk instance of width
    $c(2r + 1)$ and height $r(2c + 1)$.

    For every vertex $(i, j)$ in $G$, a dry cell with number $1$ sits at
    \[
        (X(i,j),Y(i,j))
        =\bigl(j(2r+1)+2v_j(i)+1,\; i(2c+1)+2h_i(j)+1\bigr),
    \]
    where all indices are zero-based.  All other cells are ice cells.

    \begin{figure}[t]
        \centering
        \begin{tikzpicture}[x=0.82cm,y=-0.82cm,baseline=(current bounding box.center)]
            \node[font=\small\bfseries] at (1.5,-0.9) {(a) source graph};
            \draw[gridgray,line width=1.0pt] (1,0)--(2,0);
            \draw[gridgray,line width=1.0pt] (0,1)--(1,1);
            \draw[gridgray,line width=1.0pt] (2,1)--(3,1);
            \draw[loopblue,line width=2.0pt,line cap=round,line join=round]
                (0,0)--(1,0)--(1,1)--(2,1)--(2,0)--(3,0)--(3,1)--
                (3,2)--(2,2)--(1,2)--(0,2)--(0,1)--cycle;
            \foreach \x/\y in {0/0,1/0,2/0,3/0,0/1,1/1,2/1,3/1,0/2,1/2,2/2,3/2}{
                \fill[white] (\x,\y) circle[radius=0.105];
                \draw[black,line width=0.8pt] (\x,\y) circle[radius=0.105];
            }
        \end{tikzpicture}
        \hspace{2.6em}
        \begin{tikzpicture}[x=0.20cm,y=-0.20cm,baseline=(current bounding box.center)]
            \node[font=\small\bfseries] at (13.5,-2.5) {(b) $28\times27$ Ice Walk board};
            \fill[iceblue] (-0.5,-0.5) rectangle (27.5,26.5);
            \foreach \x in {0,...,27}
                \draw[gridgray!85,line width=0.25pt] ({\x+0.5},-0.5) -- ({\x+0.5},26.5);
            \foreach \y in {0,...,26}
                \draw[gridgray!85,line width=0.25pt] (-0.5,{\y+0.5}) -- (27.5,{\y+0.5});
            \draw[black,line width=0.8pt] (-0.5,-0.5) rectangle (27.5,26.5);
            \draw[loopblue,line width=2.0pt,line cap=round,line join=round]
                (1,1)--(8,1)--(8,10)--(15,10)--(15,1)--(22,1)--
                (22,10)--(22,19)--(17,19)--(10,19)--(1,19)--(1,10)--cycle;
            \foreach \x/\y in {1/1,8/1,15/1,22/1,1/10,8/10,15/10,22/10,1/19,10/19,17/19,22/19}{
                \fill[white] (\x-0.49,\y-0.49) rectangle (\x+0.49,\y+0.49);
                \draw[black,line width=0.7pt] (\x-0.49,\y-0.49) rectangle (\x+0.49,\y+0.49);
                \node[font=\sffamily\bfseries\scriptsize,text=black] at (\x,\y) {1};
            }
        \end{tikzpicture}
        \caption{The two absent vertical source edges
            between the second and third rows put their endpoints on different
            vertical tracks.  The Hamiltonian cycle maps to the 
            Ice Walk loop.}
        \label{fig:ice-example}
    \end{figure}

    \subsection{Coordinate and visibility lemmas}

    \begin{lemma}\label[lemma]{lem:separation}
        The relative row order and column order are preserved through the conversion.
    \end{lemma}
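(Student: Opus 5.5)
The plan is to read the statement as two strict inequalities and prove each from the block structure of the coordinates. Here $X$ is the horizontal coordinate, determined by the source column $j$, and $Y$ is the vertical coordinate, determined by the source row $i$. The claim is that for all vertices $(i,j),(i',j')$,
\[
j<j' \implies X(i,j)<X(i',j'), \qquad i<i' \implies Y(i,j)<Y(i',j').
\]
In words, a clue coming from a column further to the right lies strictly to the right, and a clue from a lower row lies strictly lower, regardless of the other index.

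First I bound the offsets. Column $j$ has $r$ vertices, so it splits into at most $r$ maximal vertical intervals, and hence $0\le v_j(i)\le r-1$. This gives $1\le 2v_j(i)+1\le 2r-1$. So every clue from source column $j$ has $X$ in the block $[\,j(2r+1)+1,\; j(2r+1)+2r-1\,]$. Similarly, row $i$ has $c$ vertices, so $0\le h_i(j)\le c-1$, and $Y$ lies in $[\,i(2c+1)+1,\; i(2c+1)+2c-1\,]$. The blocks for consecutive $j$ are separated by a gap: the largest value in block $j$ is $j(2r+1)+2r-1$, while the smallest in block $j+1$ is $j(2r+1)+2r+2$. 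Hence for $j<j'$ every value in block $j$ is strictly less than every value in block $j'$, and the same holds for the $Y$ blocks. This proves both inequalities. As a by-product, the largest $X$ is $(c-1)(2r+1)+2r-1<c(2r+1)$ and the largest $Y$ is below $r(2c+1)$, so every clue lies on the board.

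I also record the within-line refinement, which the later visibility lemmas will need. For fixed $j$, the interval index $v_j(i)$ is nondecreasing in $i$, because the intervals are enumerated top to bottom. So $i<i'$ implies $X(i,j)\le X(i',j)$, with equality exactly when $(i,j)$ and $(i',j)$ lie in the same vertical component. Likewise, for fixed $i$, $j<j'$ implies $Y(i,j)\le Y(i,j')$, with equality exactly for vertices in the same horizontal component.

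There is no real obstacle here; the only point needing care is the offset bound $v_j(i)\le r-1$ (respectively $h_i(j)\le c-1$). This bound is what makes the width $2r+1$ of each block strictly exceed the spread $2r-2$ of the offsets, which in turn makes the blocks disjoint and correctly ordered.
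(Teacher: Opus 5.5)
Your proof is correct and is essentially the paper's argument. The paper bounds the difference directly, $Y(i_2,j_2)-Y(i_1,j_1)\ge (2c+1)-2(c-1)=3$, and likewise for $X$; this is the same fact as your observation that the block spacing $2r+1$ (resp.\ $2c+1$) exceeds the offset spread $2r-2$ (resp.\ $2c-2$). Your block endpoints also give an explicit gap of at least $3$. Since \cref{lem:isolation} relies on exactly that quantitative form, it is worth stating it outright.
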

    \begin{proof}
        Consider any pair of two distinct vertices $(i_1, j_1)$ and
        $(i_2, j_2)$ in $G$.

        If their ordinates differ, without loss of generality let $i_1 < i_2$,
        then
        \[
            Y(i_2,j_2)-Y(i_1,j_1)
            =(i_2-i_1)(2c+1)+2\bigl(h_{i_2}(j_2)-h_{i_1}(j_1)\bigr)
            \ge (2c+1)-2(c-1)=3.
        \]
        Thus their ordinates also differ in the converted Ice Walk instance.

        Similarly, if their abscissas in $G$ differ, without loss of generality
        let $j_1 < j_2$.  Then
        \[
            X(i_2,j_2)-X(i_1,j_1)
            =(j_2-j_1)(2r+1)+2\bigl(v_{j_2}(i_2)-v_{j_1}(i_1)\bigr)
            \ge (2r+1)-2(r-1)=3.
        \]
        Thus their abscissas also differ in the converted Ice Walk instance.
    \end{proof}

    \begin{lemma}\label[lemma]{lem:isolation}
        A cell adjacent to a dry cell is an ice cell.
    \end{lemma}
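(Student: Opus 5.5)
Since the only dry cells are the clue cells placed at $(X(i,j),Y(i,j))$, the statement is equivalent to saying that no two clue cells are orthogonally adjacent, i.e.\ there are no two distinct vertices $(i_1,j_1)\neq(i_2,j_2)$ with $|X(i_1,j_1)-X(i_2,j_2)|+|Y(i_1,j_1)-Y(i_2,j_2)|=1$. Every other cell is ice by construction, so proving this suffices. My plan is to show that for any two distinct clue cells, at least one coordinate differs by at least $2$.

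I will split on how the source vertices differ. If $i_1\neq i_2$, \cref{lem:separation} (more precisely, the displayed computation in its proof) gives $|Y(i_1,j_1)-Y(i_2,j_2)|\ge 3$, so the cells cannot be adjacent. Symmetrically, if $j_1\neq j_2$, the $X$-coordinates differ by at least $3$. Since the vertices are distinct, one of these cases always holds. This already finishes the argument.

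As a sanity check, and to make the argument self-contained in the remaining configurations, I will also note the parity fact. Both $X$ and $Y$ are of the form $(\text{integer})\cdot(\text{odd})$ plus an odd number, namely $2v_j(i)+1$ or $2h_i(j)+1$. Within a fixed column $j$, the $X$-coordinates are all odd offsets from $j(2r+1)$. So two clues in the same source column either share $X$ or differ in $X$ by an even amount, which is at least $2$. The same holds for rows and $Y$.

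I expect no real obstacle. The only care needed is recognizing that \cref{lem:separation} is stated qualitatively ("order is preserved"), while its proof actually establishes the quantitative gap $\ge 3$. I will cite that bound explicitly rather than the bare statement.
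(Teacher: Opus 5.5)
Your argument is correct and is essentially the paper's proof. Distinct vertices differ in row or column, and the gap of at least $3$ established in the proof of \cref{lem:separation} rules out orthogonal adjacency. The parity remark is also correct, but it is unnecessary, since the case split on $i_1\neq i_2$ or $j_1\neq j_2$ already covers every pair of distinct vertices.
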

    \begin{proof}
        Two distinct source vertices differ in their row or their column.
        By \cref{lem:separation}, the corresponding dry cells differ by at least
        $3$ in at least one coordinate. Hence they cannot be orthogonally adjacent.
    \end{proof}

    \begin{lemma}\label[lemma]{lem:row-track}
        Two dry cells are in the same row iff source vertices are in the
        same row and horizontal component.
    \end{lemma}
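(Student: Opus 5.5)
We prove both directions of the equivalence by unpacking the formula $Y(i,j)=i(2c+1)+2h_i(j)+1$ and using \cref{lem:separation}. Throughout, ``same row'' for dry cells means equal $Y$-coordinates, and ``same horizontal component'' means $h_i(j_1)=h_i(j_2)$ for two vertices $(i,j_1),(i,j_2)$ in row $i$.

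For the reverse direction, suppose $(i,j_1)$ and $(i,j_2)$ lie in the same source row $i$ and the same maximal horizontal interval. Then $h_i(j_1)=h_i(j_2)$, and the formula immediately gives $Y(i,j_1)=Y(i,j_2)$.

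For the forward direction, suppose $Y(i_1,j_1)=Y(i_2,j_2)$.
\begin{itemize}[nosep]
\item First we show $i_1=i_2$. Otherwise \cref{lem:separation} gives $|Y(i_2,j_2)-Y(i_1,j_1)|\ge 3$, a contradiction.
\item With $i_1=i_2=i$, equality of $Y$ reduces to $2h_i(j_1)+1=2h_i(j_2)+1$, so $h_i(j_1)=h_i(j_2)$.
\item Since $h_i$ indexes the maximal intervals of row $i$ joined by present horizontal edges, equal indices mean the two vertices lie in the same horizontal component.
\end{itemize}

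The only step needing care is ruling out a collision between different source rows. This is exactly the quantitative bound in \cref{lem:separation}, which holds because $h_i$ takes at most $c$ values, so $2h_i(j)\le 2(c-1)<2c+1$. Nothing else is an obstacle; the argument is pure bookkeeping.
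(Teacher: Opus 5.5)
Your proof is correct and matches the paper's argument. Sufficiency comes straight from the formula for $Y$. For necessity, \cref{lem:separation} forces the two source rows to be equal, and then equal ordinates give equal horizontal-component indices $h_i$.
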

    \begin{proof}
        The formula for $Y(i,j)$ depends only on $i$ and $h_i(j)$, proving
        sufficiency.  Conversely, if two converted dry cells have the same
        ordinate, \cref{lem:separation} first implies that their source vertices
        have the same row.  With the row fixed, equality of their ordinates
        implies equality of their horizontal-component indices.
    \end{proof}

    \begin{lemma}\label[lemma]{lem:column-track}
        Two dry cells are in the same column iff source vertices are in the
        same column and vertical component.
    \end{lemma}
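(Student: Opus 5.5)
This is the column-wise analogue of \cref{lem:row-track}, and I would prove it the same way, swapping the roles of $X$ and $Y$, rows and columns, and $h_i$ and $v_j$.

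\emph{Sufficiency.} By definition, $X(i,j)=j(2r+1)+2v_j(i)+1$ depends only on the column index $j$ and the vertical-component index $v_j(i)$. So if two source vertices $(i_1,j)$ and $(i_2,j)$ share a column and lie in the same maximal vertical interval, then $v_j(i_1)=v_j(i_2)$. Hence $X(i_1,j)=X(i_2,j)$, and their dry cells lie in the same column.

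\emph{Necessity.} Suppose the dry cells of $(i_1,j_1)$ and $(i_2,j_2)$ have equal abscissa. I would first invoke the second half of \cref{lem:separation}: if $j_1\neq j_2$, the abscissas differ by at least $3$, a contradiction, so $j_1=j_2=j$. With the column fixed, the equation $X(i_1,j)=X(i_2,j)$ reduces to $2v_j(i_1)+1=2v_j(i_2)+1$, so $v_j(i_1)=v_j(i_2)$. Because $v_j$ indexes the maximal vertical intervals of column $j$, equal indices mean the two vertices lie in the same vertical component.

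There is no real obstacle. The only point to check is that \cref{lem:separation} is applied to its abscissa statement, which rests on the bound $v_j(i)\le r-1$, rather than to its ordinate statement.
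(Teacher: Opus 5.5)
Your proposal is correct and follows the paper's proof: sufficiency holds because $X(i,j)$ depends only on $j$ and $v_j(i)$, and necessity uses the abscissa half of \cref{lem:separation} to force a common column, after which cancelling gives $v_j(i_1)=v_j(i_2)$. You also correctly identify $v_j(i)\le r-1$ as the bound that the abscissa separation estimate relies on.
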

    \begin{proof}
        The formula for $X(i,j)$ depends only on $j$ and $v_j(i)$, proving
        sufficiency.  Conversely, if two converted dry cells have the same
        abscissa, \cref{lem:separation} implies that their source vertices have
        the same column.  Equality of their abscissas then implies equality of
        their vertical-component indices.
    \end{proof}

    \begin{lemma}[Track-edge correspondence]\label[lemma]{lem:visibility}
        Two dry cells are consecutive in a common row or column iff
        their corresponding source vertices are adjacent in $G$.
    \end{lemma}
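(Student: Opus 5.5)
We split the statement into its two directions and treat rows; columns are symmetric, with \cref{lem:column-track} in place of \cref{lem:row-track} and $v_j$, $X$ in place of $h_i$, $Y$. Throughout, ``consecutive'' means that no other dry cell lies strictly between the two cells on that common line.

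\emph{Forward direction.} Suppose two dry cells lie in a common row and are consecutive there. By \cref{lem:row-track}, their source vertices $(i,j_1)$ and $(i,j_2)$, with $j_1<j_2$, share row $i$ and a horizontal component, so $h_i(j_1)=h_i(j_2)$. Since components are intervals, every $(i,j)$ with $j_1<j<j_2$ also has $h_i(j)=h_i(j_1)$. By \cref{lem:row-track} again, each such vertex yields a dry cell in the same Ice Walk row. By \cref{lem:separation}, the column order is preserved, so that dry cell lies strictly between the two given cells. This contradicts consecutiveness unless $j_2=j_1+1$. Adjacent vertices in the same horizontal component are joined by a present edge, because the components are maximal intervals connected by present edges. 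Hence the two source vertices are adjacent in $G$.

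\emph{Converse.} Suppose $(i,j)$ and $(i,j+1)$ are joined by a horizontal edge of $G$. Then $h_i(j)=h_i(j+1)$, and \cref{lem:row-track} places both dry cells in the same Ice Walk row. Now let some dry cell lie in that row strictly between them. By \cref{lem:row-track}, its source vertex is $(i,j')$ in the same component. By \cref{lem:separation}, its abscissa lies strictly between those of the two cells, which forces $j<j'<j+1$, which is impossible. So the cells are consecutive. A vertical edge is handled the same way, giving consecutive cells in a common column.

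\emph{Main subtlety.} The step that needs the most care is the use of \cref{lem:separation} for betweenness. That lemma only states preservation of order for vertices that differ in the relevant coordinate, so I will check that it gives a strict inequality $X(i,j_1)<X(i,j)<X(i,j_2)$ whenever $j_1<j<j_2$. Its displayed bound $\ge 3$ gives exactly this. I will also make the convention explicit that two dry cells in a common row are never also in a common column, since they would then coincide; this keeps the two cases disjoint.
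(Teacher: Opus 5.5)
Your proposal is correct and follows essentially the same route as the paper. The forward direction uses \cref{lem:row-track} together with the interval structure of components to force consecutive source columns. The converse uses \cref{lem:row-track} and order preservation to rule out an intermediate dry cell, with columns symmetric. The only difference is that you explicitly invoke the strict order preservation from \cref{lem:separation} for the betweenness step, which the paper leaves implicit.
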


    \begin{proof}
        Consider the horizontal case first.

        Let two consecutive dry cells lie in one row.  By \cref{lem:row-track},
        their source vertices lie in one horizontal component.  If their source
        columns were not consecutive, the intervening source vertex would also
        lie in that component, and its dry cell would lie strictly between the
        two, a contradiction.  Thus the source vertices are adjacent in $G$.

        Conversely, suppose two source vertices are joined by a horizontal edge.
        They lie in one horizontal component, so their dry cells have the same
        ordinate.  A dry cell strictly between them would, by
        \cref{lem:row-track}, come from the same source row; its source column
        would then lie strictly between two consecutive columns, which is
        impossible.  Hence the two dry cells are consecutive.

        The vertical case is identical, using \cref{lem:column-track}.
    \end{proof}

    \begin{lemma}[Bijection between cycles and solutions]\label[lemma]{lem:bijection}
        A Hamiltonian cycle in graph $G$ corresponds to a unique solution of
        the converted Ice Walk instance.
    \end{lemma}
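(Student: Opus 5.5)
We will build explicit maps in both directions and check that they are mutually inverse. In the forward direction, let $H$ be a Hamiltonian cycle of $G$ with vertex sequence $u_0,u_1,\dots,u_{n-1},u_0$. We map it to the closed walk in the Ice Walk grid that visits the dry cells of $u_0,\dots,u_{n-1}$ in order. Consecutive dry cells are joined by the straight segment of cells between them. By \cref{lem:visibility}, each edge $u_ku_{k+1}$ of $G$ has its two dry cells consecutive in a common row or column. So each segment is straight, and its interior cells are all ice; moreover, by \cref{lem:isolation}, the cells next to every clue are ice.

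We then verify the Ice Walk rules one at a time.
\begin{itemize}
\item Every clue is visited, because $H$ is Hamiltonian.
\item Turns happen only at dry cells, because segments are straight through ice.
\item Each clue is a maximal dry section of length $1$, because its neighbours along the loop are ice (\cref{lem:isolation}). So every $1$-clue is satisfied.
\item No two segments overlap. Two collinear segments in one row lie on the same track (\cref{lem:row-track}), and they join distinct consecutive pairs of clues (\cref{lem:visibility}), so their interiors are disjoint. The same holds for columns by \cref{lem:column-track}.
\item Perpendicular segments can meet only in interior ice cells, and they cross straight there. They cannot meet at a dry cell, since a dry cell lies on only one row track and one column track, and the loop through it uses exactly two of its incident segments.
\end{itemize}
Hence the image is a valid single loop. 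The loop is determined by the unoriented edge set of $H$, so it is well defined independently of the starting point and direction.

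In the backward direction, take any Ice Walk solution $L$. Each clue is a dry cell whose neighbours are ice (\cref{lem:isolation}), and the loop cannot turn on ice. Therefore, leaving a clue, the loop travels straight until it reaches the next dry cell on that line. It may pass intermediate ice cells, possibly crossing other segments, but it never stops or turns before that cell.

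The main subtlety is to justify that this line does reach a dry cell rather than escaping. A straight ray through ice cannot turn, and the loop is closed. So the maximal straight run between turns must end at cells where turns are permitted, and those are exactly the dry cells. Consequently, every maximal straight piece of $L$ joins two dry cells that are consecutive in a common row or column, because it cannot pass through a dry cell without that cell being a visited clue at which the piece is split. By \cref{lem:visibility}, each such piece corresponds to an edge of $G$.

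Every clue has degree exactly $2$ in $L$, so the pieces form a $2$-regular spanning subgraph of $G$. It is connected because $L$ is a single loop, so it is a Hamiltonian cycle $H$.

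The two constructions are inverse to each other: the unoriented edge set of $L$ is exactly the union of the straight tracks of the edges of $H$. Both maps are computable in polynomial time, which gives the bijection.
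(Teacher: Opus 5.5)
Your proposal is correct and follows essentially the same route as the paper. Forward, you map each cycle edge to the straight ice track between consecutive clues via the track-edge correspondence lemma and check the rules. Backward, you cut a solution at every dry cell to recover straight tracks, and hence a connected 2-regular spanning subgraph of $G$.

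One wording fix: perpendicular segments do meet at a dry cell wherever the loop turns. What you mean is that no crossing occurs at a dry cell, and that already follows from your observation that every segment's interior is ice, together with degree two at each clue.
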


    \begin{proof}
        Given a Hamiltonian cycle of $G$, replace every selected source edge by
        the straight track between the corresponding dry cells.  By
        \cref{lem:visibility}, every such track joins consecutive dry cells.
        At each dry cell exactly two selected tracks meet, so these tracks form
        one loop.  Distinct collinear tracks have disjoint interiors, and a
        perpendicular interior intersection can occur only on ice: otherwise
        the dry cell at the intersection would contradict consecutiveness.
        Thus the trace obeys the turning, crossing, and overlap rules of Ice
        Walk.  By \cref{lem:isolation}, every non-icy section contains exactly
        one dry cell, so each 1-clue is satisfied.

        Conversely, cut an Ice Walk solution at every dry cell.  Each resulting
        subpath has dry endpoints and only ice as intermediate cells.  It cannot
        turn, and hence is a straight track.  By \cref{lem:visibility}, its two
        endpoints correspond to an edge of $G$.  The tracks incident with each
        dry cell give degree two at the corresponding source vertex, and the
        single puzzle loop makes the selected edges connected.  They therefore
        form a connected spanning 2-regular subgraph of $G$, namely a Hamiltonian
        cycle.

        Both conversions simply replace edges by their tracks, or tracks by
        their endpoint edges, so they are polynomial-time inverses.  Hence the
        correspondence is a bijection.
    \end{proof}

    \begin{theorem}\label{thm:ice}
        The Ice Walk solution-search problem is ASP-complete.
    \end{theorem}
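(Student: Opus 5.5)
To prove \cref{thm:ice} we need two things: Ice Walk lies in $\FNP$, and there is an ASP reduction from the source problem of \cref{thm:source}. ASP reductions compose, since the composite instance maps and the composite solution bijections are still polynomial-time and mutually inverse. So once these two facts are established, every NP search problem reduces to Ice Walk.

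\emph{Membership in $\FNP$.} A candidate solution is a set of unit grid edges on the board. There are only polynomially many such edges, so a candidate has polynomial length. The verifier checks, in time polynomial in the board size, the following conditions. Every cell has degree $0$, $2$, or $4$ in the chosen edge set, and degree $4$ is allowed only on an icy cell. A degree-$4$ icy cell is read as a straight crossing, and a degree-$2$ icy cell must have its two edges collinear. Once each degree-$4$ cell is split into two straight passes, the resulting edge set is a single closed loop; this is checked by walking it. Every clue cell lies on the loop. Finally, for each clue, the verifier walks the loop in both directions from the clue, collecting the maximal consecutive run of dry cells, and compares the length of that run with the clue value. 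Because solutions are unoriented edge sets, each solution has exactly one encoding, so no quotienting is needed.

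\emph{Reduction.} Given $G$, the instance map is the construction above. Computing the components $h_i$ and $v_j$ takes linear time, and the board has dimensions $c(2r+1)\times r(2c+1)$, which is polynomial. So the instance map runs in polynomial time. \Cref{lem:bijection} then supplies the two solution maps: a Hamiltonian cycle is sent to the loop obtained by replacing each edge with its track, and a loop is sent back to the edges joining the endpoints of its tracks. Both maps are polynomial-time, and they are mutually inverse. This is exactly the data an ASP reduction requires. No extra promise from \cref{thm:source} is needed, because the construction handles any maximum-degree-3 spanning subgraph.

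The only delicate point is making sure that \cref{lem:bijection} really gives a bijection between the two solution sets, and not merely a pair of maps. Two things need checking. First, distinct Hamiltonian cycles must give distinct edge sets on the board. This holds because each track determines its pair of endpoint clues, so different source edges yield different tracks. Second, the converse direction must recover exactly the original cycle, which follows because cutting the loop at its dry cells returns precisely the tracks that were inserted. With both checks done, Ice Walk is in $\FNP$ and inherits ASP-hardness from \cref{thm:source}, so it is ASP-complete.
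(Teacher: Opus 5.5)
Your proposal is correct and follows the same route as the paper's proof. The paper likewise gets the ASP reduction from the polynomial-size construction together with \cref{thm:source} and \cref{lem:bijection}, and proves $\FNP$ membership by encoding a solution as its set of used unit edges and checking the clue lengths, the turning and crossing rules, and that the trace is a single loop. Your additional remarks only spell out what the paper leaves implicit: that ASP reductions compose, that the forced-side-edge promise is never used, and that injectivity follows because the backward map recovers the original cycle.
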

    \begin{proof}
        Each graph $G$ that is a maximum-degree-3 spanning subgraph of a
        rectangular grid graph with $r$ rows and $c$ columns can be converted
        into an Ice Walk instance with $r(2c+1)$ rows and $c(2r+1)$ columns.

        Hence the Ice Walk board has area $rc(2c+1)(2r+1)$, polynomial in the
        source size.

        By \cref{thm:source} and \cref{lem:bijection}, this construction is an
        ASP reduction to Ice Walk.  A solution can be encoded by its used unit
        grid edges.  Non-overlap bounds this encoding by the board area, and a
        verifier can in polynomial time check the clue lengths, the turning and
        crossing rules, and that the resulting trace is one loop.  Thus Ice
        Walk belongs to $\FNP$, and is ASP-complete.
    \end{proof}

    \begin{corollary}\label[corollary]{cor:ice-restricted}
        Ice Walk remains ASP-complete even when every non-ice cell is numbered.
    \end{corollary}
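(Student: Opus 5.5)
The plan is to observe that the reduction behind \cref{thm:ice} already produces instances of the restricted form, so no new gadget is needed. By construction, every source vertex $(i,j)$ becomes a dry cell carrying the clue $1$ at $(X(i,j),Y(i,j))$, and every other cell of the board is ice. So the image of the reduction contains no unnumbered dry cell, and every non-ice cell is numbered.

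The argument then proceeds in three short steps. First, I would state the restricted problem: Ice Walk instances in which every dry cell carries a clue. Second, I would show it is in $\FNP$. It is a subproblem of Ice Walk, and recognising the restriction amounts to scanning the board for unnumbered dry cells, so the verifier from \cref{thm:ice} applies unchanged. Third, I would show hardness: the map from a maximum-degree-3 spanning subgraph $G$ of a rectangular grid graph to the board is the same polynomial-time map as before, and its image lies in the restricted class. The solution maps of \cref{lem:bijection} do not depend on the class the instance is regarded as belonging to, so they remain polynomial-time mutual inverses between Hamiltonian cycles of $G$ and loops on the board. Composing with \cref{thm:source} gives an ASP reduction from an ASP-complete problem to the restricted problem.

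The only point needing care is that ASP-completeness transfers to a subclass when the reduction's image lies inside it. The solution set of an instance is the same whether we view it as a general or a restricted instance. Also, membership in $\FNP$ for a promise-free subclass is immediate because the syntactic restriction is polynomial-time decidable. I expect no real obstacle; the main task is to state explicitly that the construction places clues only at vertex images and fills everything else with ice.
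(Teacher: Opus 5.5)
Your proposal is correct and matches the paper's proof, which consists of the single observation that the reduction behind \cref{thm:ice} puts a 1-clue on every dry cell and makes every other cell ice, so its image already lies in the restricted class. Your extra remarks on $\FNP$ membership and on why the solution bijection is unaffected by viewing instances as restricted spell out the same point in more detail.
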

    \begin{proof}
        The instances constructed by the reduction have this form.
    \end{proof}

\end{document}